% THIS TEMPLATE IS A WORK IN PROGRESS
% Adapted from an original template by faculty at Reykjavik University, Iceland

\documentclass{scrartcl}

% Adapted from an original template by Hlyni Arnórssyni, Reykjavik University, Iceland
%
% ------------------------------ SETTINGS
\usepackage{geometry}

\geometry{
	paper=a4paper, % Paper size
	top=2.5cm, % Top margin
	bottom=2.5cm, % Bottom margin
	left=2.5cm, % Left margin
	right=2.4cm, % Right margin
	headheight=0.75cm, % Header height
	footskip=1.5cm, % Space from the bottom margin to the baseline of the footer
	headsep=0.75cm, % Space from the top margin to the baseline of the header
	%showframe, % Uncomment to show how the type block is set on the page
}

\usepackage{blindtext}
%-------------------------------- Character encoding ----------------------------
\usepackage[T1]{fontenc}
\usepackage[utf8]{inputenc}

%----------------------------- Mathematics packages from AMS ---------------

\usepackage{amsmath, amsfonts, amsthm, amssymb}
\usepackage{braket, nicefrac}
\usepackage{tikz}
%\usetikzlibrary{shapes,arrows,positioning}
\usetikzlibrary{automata,arrows,positioning,calc}
\usepackage{dsfont}
\usepackage{relsize}
\usepackage{tikz-cd}

% ----------- International System of Units
\usepackage{siunitx}

%------------------------------ Lists / numbers -------------------------
\usepackage{enumitem, multicol}
\usepackage{multirow}

%------------------------------- Figure insertions --------------
\usepackage{svg}
\DeclareUnicodeCharacter{3B1}{\ensuremath{\alpha}}
\DeclareUnicodeCharacter{3B2}{\ensuremath{\beta}}
\usepackage{graphicx, float}  % Use option [H] to force the placement of a figure
\usepackage{keystroke}
\usepackage{pgfplots}\usepgfplotslibrary{units}\pgfplotsset{compat=1.16}
\newtheorem{theorem}{Theorem}
\newtheorem*{remark}{Property}

%------------------------------- Markov Diagram -----------------
\usepackage{pgfplots}         % it load tikz too
\pgfplotsset{compat=1.16}
\usetikzlibrary{automata,
                arrows.meta,    %   ...customizing arrows
                positioning,    %   ...positioning nodes
                quotes}         % For edge labels
\usepgfplotslibrary{fillbetween}
\tikzset{node distance=4.5cm,   % Minimum distance between nodes. Change if necessary.
         every state/.style={   % Sets the properties for each state
                semithick,
                fill=gray!10},
         initial text={},       % No label on start arrow
         double distance=4pt,   % Adjust appearance of accept states
         every edge/.style={    % Sets the properties for each transition
                draw,
                semithick,
                -Stealth,       % Makes edges directed with bold arrowheads
                auto},
         bend angle=15          % Reduce default bend angle
         }

\usepackage{hyperref}           % had to be last in preamble

%------------------------------- Line Spacing --------------
\usepackage{setspace}

%------------------------------- Depth of the ToC --------------
\setcounter{tocdepth}{2}

%%%%%%%%%%%%%%%%%%%%%%%%%% Hyperlink References %%%%%%%%%%%%%%%%%%%%%%%%%%%
\usepackage{hyperref}

%--------------------% Storage Path for images %-----------------%
\graphicspath{{graphics/}{Graphics/}{./}}

%%%%%%%%%%%%%%%%%%%%%%%%%% Environments %%%%%%%%%%%%%%%%%%%%%%%%%%%
\renewenvironment{abstract}{
    \begin{center}
    \textbf{Abstract}
    \vspace{0.5cm}
    \par\itshape
    \begin{minipage}{0.8\linewidth}}{\end{minipage}
    \noindent\ignorespaces
    \end{center}
}

\newenvironment{keywords}{
    \begin{center}
    \textbf{Keywords}
    \vspace{0.5cm}
    \par
    \begin{minipage}{0.8\linewidth}}{\end{minipage}
    \noindent\ignorespaces
    \end{center}
}

\newenvironment{preface}{
    \begin{center}
    \textbf{Preface}
    \vspace{0.5cm}
    \par
    \begin{minipage}{0.8\linewidth}}{\end{minipage}
    \noindent\ignorespaces
    \end{center}
}

\newenvironment{acknowledgements}{
    \begin{center}
    \textbf{Acknowledgements}
    \vspace{0.5cm}
    \par
    \begin{minipage}{0.8\linewidth}}{\end{minipage}
    \noindent\ignorespaces
    \end{center}
}

\newcommand{\lsum}{\mathop{\mathlarger{\mathlarger{\mathlarger{\sum}}}}}

\begin{document}
%Title of the report, name of coworkers and dates (of experiment and of report).
\begin{titlepage}
	\centering
	\includegraphics[width=0.6\textwidth]{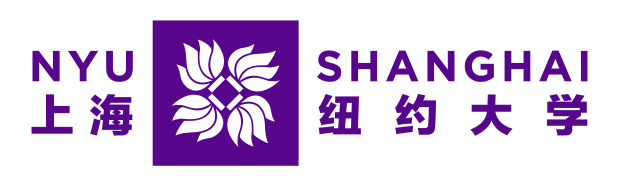}\par
	\vspace{2cm}
	%%%% COMMENT OUT irrelevant lines among the 3 below
	{\scshape\LARGE Computer Science \par}  %if you're a CS major
	{\scshape\LARGE \& \par}                %if you're a CS & DS major
	{\scshape\LARGE Data Science \par}      %if you're a DS major
	\vspace{1cm}
	{\scshape\Large Capstone Report - Spring 2022\par}
	%{\large \today\par}
	\vfill
	
	%%%% PROJECT TITLE
	{\huge\bfseries  ASIR: Robust Agent-based Representation Of SIR Model \par}
	\vfill
	
	%%%% AUTHOR(S)
	{\Large\itshape Boyan Xu\\}\par
	\vspace{1.5cm}

	\vfill
	supervised by\par
	%%%% SUPERVISOR(S)
	Olivier Marin

	\vfill
% Bottom of the page
\end{titlepage}

\newpage

\begin{preface}

As a student double majoring in Computer Science and Data Science, I have always been attracted by the idea behind computer simulation: modeling the real world in programming language, making statistical inference about the real world. I said yes immediately when professor Olivier Marin told me if I was interested in on building some bridge between compartmental model (which is based on ordinary differential equation) and agent-based model (which is based on computer simulation.) We only have a vague direction at the beginning, but as we went further in the literature, we were more and more certain about we would like to contribute to. Finally, we proposed the agent-based $ASIR$ that can robustly reproduce the infection curve predicted by the compartmental SIR Model. We are happy to see that $ASIR$ is inspirational for epidemiologists who wish to quickly transform a calibrated SIR model into an agent-based model that retains its prediction without another round of calibration.

\end{preface}

\vspace{1cm}

\begin{acknowledgements}
% Acknowledgements allow you to thank those who have helped and supported you during this capstone project, both on a professional and on a personal level. Here you can use a more informal style, as this is not part of the academic work itself.

First, I would like to express my deepest gratitude to my parents. They have always been supportive, open-minded, tolerant on every decision I made for myself. While NYU Shanghai's education makes me an intellectually cultivated person, it is their education that gives me my integrity of personality. Then I would like to acknowledge every professor I've met at NYU Shanghai and NYU. They are all professional instructors and considerate professors. Especially, I would like to sincerely acknowledge Professor Olivier Marin for patiently instructing me how to read, think and write academically. I cannot complete my capstone project without his unconditional support and comprehensive instruction. Thank Professor Guyue Liu for encouraging me, and supporting my capstone project during the depressing lockdown. Thank Professor Bruno Abrahao for supervising my research in data science. Finally, I would like to thank all the peers I met at NYU Shanghai, they constitute my undergraduate life. Thank  my "dude" Quang Luong for being my best CS professor. Thank my girlfriend Helen Zhang for always staying by my side.

\end{acknowledgements}

\newpage

\begin{abstract}
% 	    A strong abstract sums up your work in very few sentences:
% 	    (i) state the problem you are addressing;
% 	    (ii) say why it’s an interesting problem, and which issues are hard to tackle; 
% 	    (iii) give your approach towards solving the problem; 
% 	    (iv) say why and how well your approach solves the problem.
 
Compartmental models (written as $CM$) and agent-based models (written as $AM$) are dominant methods in the field of epidemic simulation. But in the literature there lacks discussion on how to build the \textbf{quantitative relationship} between them. In this paper, we propose an agent-based $SIR$ model: $ASIR$. $ASIR$ can robustly reproduce the infection curve predicted by a given SIR model (the simplest $CM$.) Notably, one can deduce any parameter of $ASIR$  from parameters of $SIR$ without manual tuning. $ASIR$ offers epidemiologists a method to transform a calibrated $SIR$ model into an agent-based model that inherit $SIR$'s performance without another round of calibration. The design $ASIR$ is inspirational for building a general quantitative relationship between $CM$ and $AM$.

% We validated $\hat X$'s properties through 1). analytical reasoning and 2). computer simulation.

\end{abstract}
\vspace{1cm}

\begin{keywords}
\centering
        \textbf{Computer Simulation; Epidemic Simulation}
\end{keywords}

\newpage

\doublespacing
\tableofcontents
\singlespacing

\newpage

\doublespacing

\section{Introduction}

% Your introduction briefly explains the problem you address, and what you've achieved towards solving the problem. It's an edited and updated version of your context and objectives from your topic outline document.

Compartmental models (written as $CM$) and agent-based models (written as $AM$) are dominant methods in the field of epidemic simulation.\cite{Grimm_Berger_Bastiansen_Eliassen_Ginot_Giske_Goss-Custard_Grand_Heinz_Huse_et_al._2006}. $CM$ capture the \textbf{population level} dynamics by a set of ordinary differential equations. $AM$ capture the \textbf{individual level} dynamics by an agent-based programming environment.

 $CM$ and $AM$ have complementary nature. $CM$ are easy to calibrate but have less flexible parameter space to apply a priori; $AM$ have a flexible parameter space to apply a priori, but are hard to calibrate. In the current literature, there lacks discussion on developing the \textbf{quantitative relationship} between $CM$ and $AM$. Driven by this fact, we wish to bridge the gap between $CM$ and $AM$ by finding an $AM$ that \textbf{robustly reproduces the infection curve} predicted by a $CM$.

We start from the $SIR$ model (the simplest $CM$): where $P_{sir}$ is the set of all parameters; and functions $S_{sir}(t)$, $I_{sir}(t)$, $R_{sir}(t)$ are the population size of being susceptible, infected and recovered with respect to time $t$. We propose an agent-based SIR model, $ASIR$, that achieves the following interesting properties:
\begin{enumerate}
\item $P_{asir}$ only depends on $P_{sir}$. Any parameter $p \in P_{asir}$ can be deduced from $P_{sir}$, i.e. can be written as a determinate expression of $\{p_1, p_2, ..., p_k\} \subset P_{sir}$.
\item $ASIR$ robustly reproduces the infection curve predicted by $SIR$. $ASIR$ is expected to predict the same $S(t), I(t), R(t)$  as $SIR$, i.e. $\forall t: \mathbb{E}\big [ S_{asir}(t) \big] = S_{sir}(t)$; \ $\mathbb{E}\big[ I_{asir}(t) \big] = I_{sir}(t)$; \ $\mathbb{E}\big[ R_{asir}(t)\big] = R_{sir}(t) $
\end{enumerate}

We validate $ASIR$'s properties by giving: 1). a proof of robustness, 2). two implementations in \textbf{GAMA} and \textbf{Agents.jl}.

\section{Related Work}

$SIR$ model is the simplest compartmental model ($CM$). In ~\cite{sir-intro}, the authors give us an overview of the design behind the compartmental model. From this work, we learned that the core idea of $CM$ is to use a set of ordinary differential equations to model the population infection and recovery, and use parameters to control their rate.

In ~\cite{abm-intro}, the authors give us an overview of the agent-based simulation's application in the field of epidemiology. From this work, we learned that representation of "space" is what distinguish agent-based model ($AM$) from $CM$, and drew our attention to the design of agents' \textbf{Move} behavior.

In ~\cite{sim-zombie}, the authors describe a general method for the conversion of an equation-based model to an agent-based simulation. Their method was not built on solid mathematics, but the discussion about the relationship between population behavior and individual behavior has greatly inspired our idea behind $ASIR$. Our model can be seen as translating their rough ideology into rigorous proof in mathematics.

~\cite{gama} and  ~\cite{Agents.jl}  are the multiagent programming environments we use to implement $ASIR$. Their design is the direct source of our perception of what is agent-based simulation. $ASIR$ has been influenced by the concepts of "Agent," "Step," and "Map" that were implemented by  ~\cite{gama} and  ~\cite{Agents.jl}.

~\cite{markov} is the book we used as a reference for the necessary condition for the existence of a Markov chain's stationary distribution.

\section{Solution}

In this section, we introduce $ASIR$ in the following order: 1). idea behind, 2). model specification, and 3). proof of robustness.

\subsection{Idea Behind $ASIR$}
Let us first briefly recap the design of the $SIR$ model. 
\break
\begin{figure}[h]
	\begin{center}
		\includegraphics[scale=0.618]{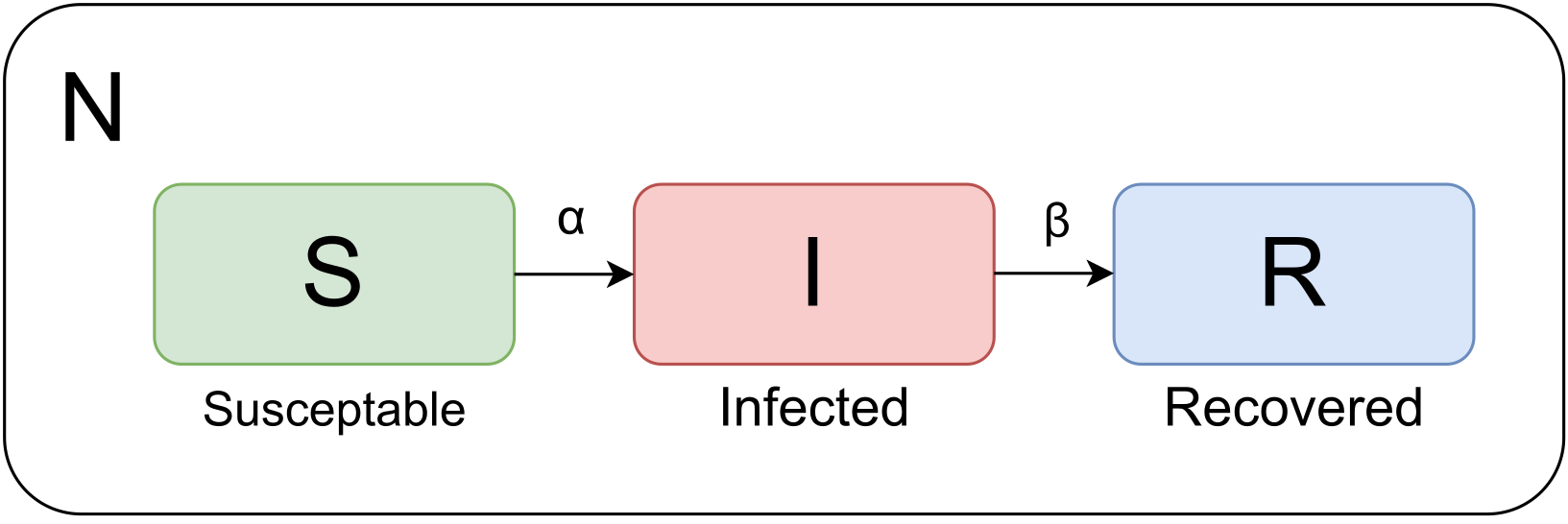}
	\end{center}
	\caption{Diagram of SIR model}
	\label{graph:sir-diagram}
\end{figure}

The $SIR$ model consists of two parameters: $\{\alpha, \beta \}$ and three ordinary differential equations:
\begin{align}
\frac{d S}{d t}&=-\frac{\alpha S I}{N} \label{sir1}\\
\frac{d I}{d t}&=\frac{\alpha S I}{N}-\beta I \label{sir2}\\
\frac{d R}{d t}&=\beta I \label{sir3}
\end{align}
$N$ is the total population size. $S$ is the susceptible population size. $I$ is the infected population size. $R$ is the recovered population size. $\alpha$ controls the transition speed from \textbf{S}usceptible into \textbf{I}nfected. $\beta$ controls the transition speed from \textbf{I}nfected into \textbf{R}ecovered.

Equation \ref{sir1} models the transition speed of \textbf{S}usceptible population size. Equation \ref{sir2} models the transition speed of \textbf{I}nfected population size. Equation \ref{sir3} models the transition speed of \textbf{R}ecovered population size.

The intuitions behind $ASIR$ are:

\begin{enumerate}
\item \textbf{Population infection} is an integral of \textbf{individual infection}.
\item \textbf{Population recovery} is an integral of \textbf{individual recovery}.
\end{enumerate}

To translate these intuitions into mathematics, we adopt the theory of probability by treating population infection/recovery as the joint distribution of individual infection/recovery.

The core ideas behind $ASIR$ are:

\begin{enumerate}
\item Model individual infection/recovery as \textbf{mutually independent and identically distributed random events}.
\item Use parameters to control the \textbf{event probability}.
\item The \textbf{transition speed} on population-level = the \textbf{expected value} of the \textbf{integral of event probability} on individual-level.
\end{enumerate}

Notably, since a \textbf{S}usceptible individual must be infected by an \textbf{I}nfected individual, an individual infection at time $t+1$ is \textbf{conditional} on another individual's infection at time $t$ (or ahead of $t$ ). To guarantee \textbf{independence} between each individual's infection, we model the movement of every agent (or individuals, we are using these words interchangeably) using the same transition matrix $T_{\text{map} }$. 
The matrix below shows a simple example where our $\text{map}$ consists of three locations:

\begin{equation}
\mathbf{T_{\text{map}}} = 
    \bordermatrix{ & \text{Store} & \text{School} & \text{Stop} \cr
      \text{Store} & 0.5 & 0.3 & 0.2 \cr
      \text{School}  & 0.3 & 0.3 & 0.4 \cr
      \text{Stop} & 0.2 & 0.4 & 0.4 } \qquad
\label{eq:tranmat}
\end{equation}
\break

In this example, coordinate $T_{mn}$ is the probability of moving from $ m \text{ to } n$. We focus on the period after every agent's trajectory reaches $T_{\text{map} }$'s stationary distribution. We discuss why stationary distribution is critical for robustness later in this section.

In the following subsection, we introduce $ASIR$'s detailed specification in the following order: 1). agents' state, 2). agents' behavior, 3). model's parameter setting.

\subsection{Model Specification}

Each agent's state can be written as a 3-element tuple: $\Big(\text{Timestamp}, \text{Health}, \text{Position} \Big)$. "An agent $a_1$ has state $\Big(8, I, \text{School} \Big)$" translates as: "at the 8-$th$ timestamp, $a_1$ is being \textbf{I}nfected at $\text{School}$." In this paper, we will use:

\begin{enumerate}
\item $\big(a_k, h_t\big)$ or $H^{t}_{a_k}$ to denote an agent's health at timestamp $t$,
\item $\big(a_k, p_t\big)$ or $P^{t}_{a_k}$ to denote an agent's position at timestamp $t$,
\item $\big(a_k , h_t, p_t \big)$ or $\big( H^{t}_{a_k}, P^{t}_{a_k}\big)$ to denote an agent's state at timestamp $t$.

\end{enumerate}

\begin{table}[htp]
\begin{center}

\begin{tabular}{ |p{5.5cm}||p{2cm}|p{2cm} |  }
 \hline
 Meaning &Form 1 &Form 2\\
 \hline
 &    & \\
 an agent's health at timestamp $t$   & $\big(a_k, h_t\big)$    &$H^{t}_{a_k}$\\
 &    & \\
 \hline
 &    & \\
 an agent's position at timestamp $t$  & $\big(a_k, p_t\big)$   &$P^{t}_{a_k}$\\
 &    & \\
 \hline
 &    & \\
 an agent's state at timestamp $t$  & $\big(a_k , h_t, p_t \big)$   & $\big( H^{t}_{a_k}, P^{t}_{a_k} \big)$\\
 &    & \\
\hline
\end{tabular}
\caption{\label{tab:state-reference} Symbol reference for agent state.}
\end{center}
\end{table}

At timestamp $t$, each agent $a_k$ has three (potential) behaviors: 
\begin{enumerate}
\item \textbf{Move}. $a_k$ moves from  $p_{t-1}$ to $p_t$ (which can be the same position as $p_{t-1}$). Written as:
\begin{equation}
    P_{a_k}^{t-1}  \to  P_{a_k}^{t}
\label{eq:move}
\end{equation}
As mentioned in the introduction, we model the movement of every agent using the same transition matrix $T_{\text{map} }$. The trajectory of every agent forms a Markov chain as the example below shows:
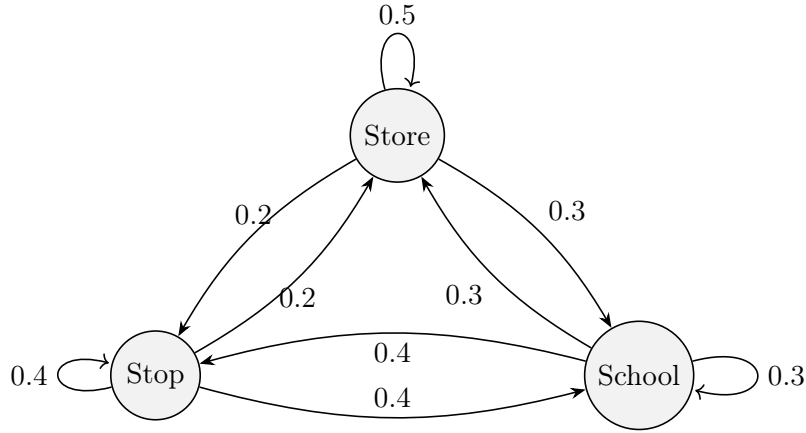
\begin{figure}[htp]
\centering
\begin{tikzpicture}[]
\node[state] (s1) {Store};
\node[state, below right of=s1] (s2) {School};
\node[state, below left of=s1]  (s3) {Stop};

\draw   (s1) edge[loop above] node{0.5}  (s1)
        (s1) edge[bend left]  node{0.3}  (s2)
        (s1) edge[bend right, above] node{0.2}  (s3)
        (s2) edge[bend left]  node{0.3}  (s1)
        (s2) edge[loop right] node{0.3}  (s2)
        (s2) edge[bend right] node{0.4}  (s3)
        (s3) edge[bend right, below] node{0.2}  (s1)
        (s3) edge[bend right] node{0.4}  (s2)
        (s3) edge[loop left]  node{0.4}  (s3);
\end{tikzpicture}
\caption{Markov chain of the sample $T_{\text{map}}$ \ref{eq:tranmat} }
	\label{graph:markov-eample}
\end{figure}

We insist on focusing on the period after every agent's position reaches $T_{\text{map} }$'s stationary distribution, because the stationary distribution offers us the following critical property to deduce $ASIR$'s robustness:

\begin{remark}
When every agent's position reaches the $T_{\text{map} }$'s stationary distribution, the probability that any two agents \textbf{become neighbor} at time $t$  (i.e. stay at the same position $p_t$) equals to a constant $\mathds{P}(\text{meetup})$.  $\mathds{P}(\text{meetup})$ is completely determined by $T_{\text{map} }$.
\end{remark}

The proof is trivial. An intuition is that agents' locations are mutually independent and identically distributed, therefore $\forall j,k,m,n, \ \mathds{P}(a_k \text{ meets } a_j) = \mathds{P}(a_m \text{ meets } a_n)$

\item \textbf{Turn infected}. When $a_k$ was \textbf{S}usceptible before moving to position $X$, it has a chance to turn infected when there is an "\textbf{I}nfected neighbor" at $X$,  or more precisely, $\exists_{\ a_j \neq a_k}  H_{a_j}^t = I, P^{a_j}_t = P^{a_k}_t $. Written as: 
\begin{equation}
   \Bigg ( H_{a_k}^{t-1}\to H_{a_k}^{t}  =   S\to I \Bigg \vert  \exists_{\ a_j \neq a_k}  H_{a_j}^t = I, P^{a_j}_t = P^{a_k}_t \Bigg )
   \label{eq:infect1}
\end{equation}

or simply:

\begin{equation}
   \Bigg ( H_{a_k}^{t-1}\to H_{a_k}^{t}  =  S\to I  \Bigg \vert a_k  \text{ has an infected neighbor at t} \Bigg  )
   \label{eq:infect2}
\end{equation}\begin{figure}[h]
	\begin{center}
		\includegraphics[scale=0.618]{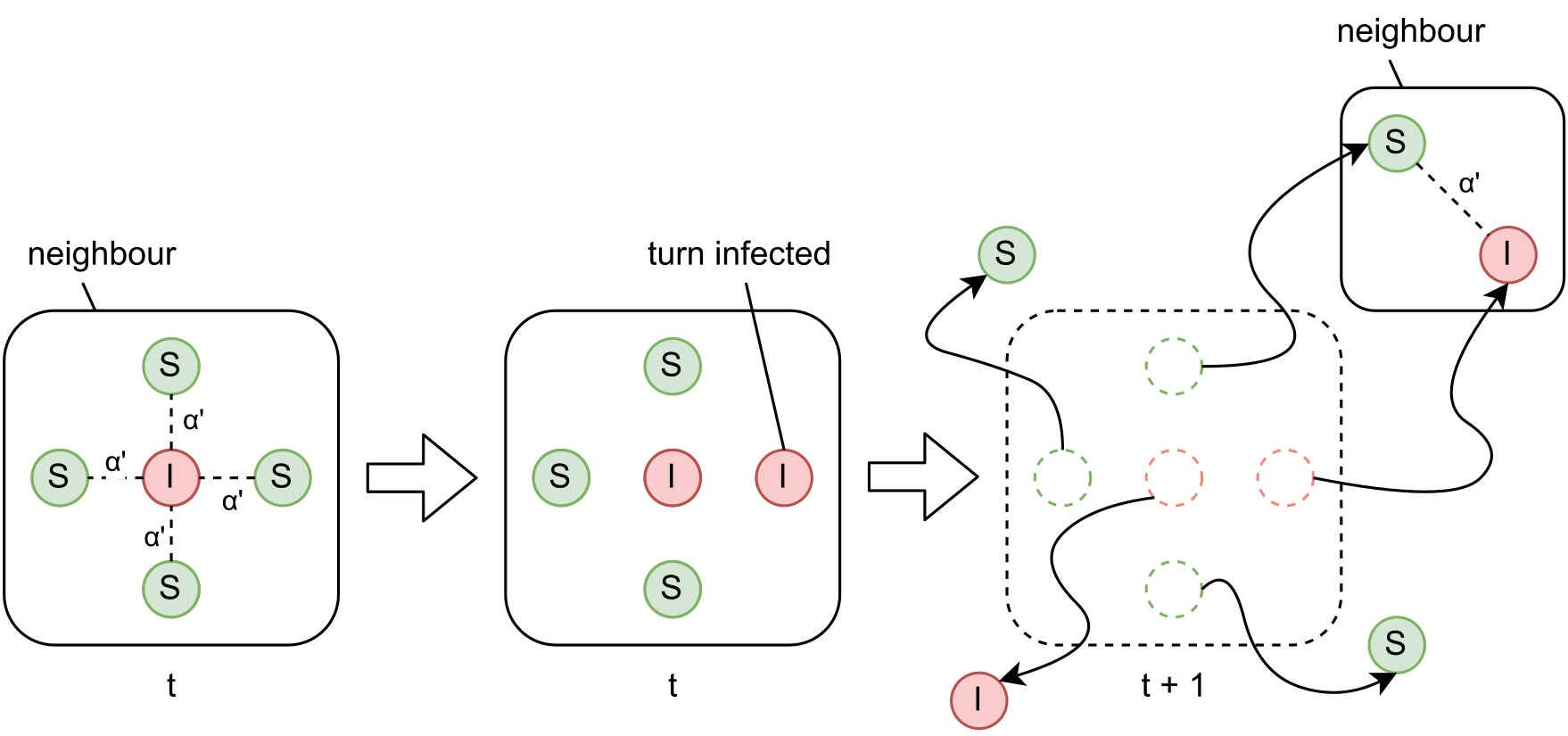}
	\end{center}
	
\caption[Individual infection process in ASIR model; $\alpha$' is the infection probability]
    {\tabular[t]{@{}l@{}}Individual infection process in ASIR model. \\ $\alpha$' is the infection probability\endtabular}
    
	\label{graph:asir1}
\end{figure}

\item \textbf{Turn recovered}. When $a_k$ was \textbf{I}nfected at time $t-1$ it has a chance to turn \textbf{R}ecovered at time $t$. Written as:

\begin{equation}
   \Bigg (  H_{a_k}^{t-1}\to H_{a_k}^{t}  =   I\to R \Bigg \vert  H_{a_k}^{t-1} = I \Bigg )
    \label{eq:recover}
\end{equation}

or simply:

\begin{equation}
     H_{a_k}^{t-1}\to H_{a_k}^{t}  =   I\to R
     \label{eq:recover2}
\end{equation}

\begin{figure}[h]
	\begin{center}
		\includegraphics[scale=0.618]{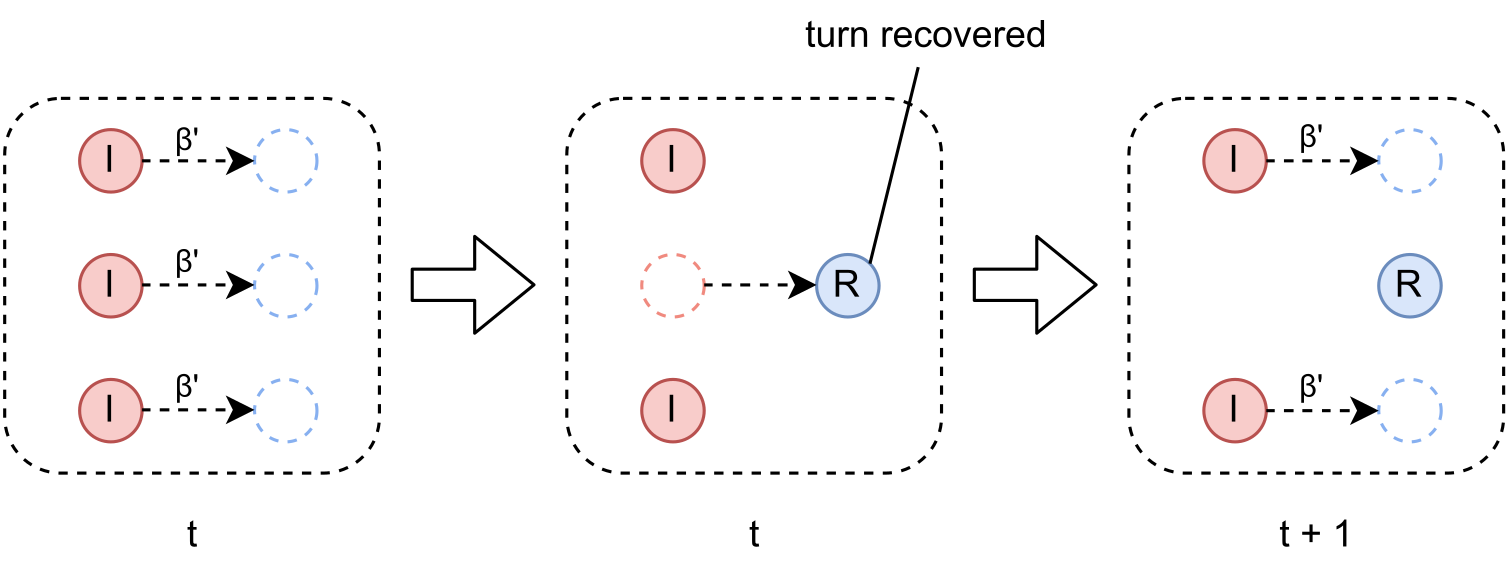}
	\end{center}
  
   \caption[Individual recovery process in the ASIR model; $\beta$' is the recovery probability]
    {\tabular[t]{@{}l@{}}Individual recovery process in the ASIR model. \\ $\beta$' is the recovery probability \endtabular}
    
	\label{graph:asir2}
\end{figure}
  
\end{enumerate}

As mentioned in the introduction, a core idea supporting $ASIR$ to robustly reproduce $SIR$'s prediction is using parameters to control the \textbf{event probability}. In our model, there are only two types of random events that involve transition in individual health state $H_{a_k}^{t}$: 1). \textbf{turn infected}: $\big(H_{a_k}^{t-1}\to H_{a_k}^{t} \big) =  \big( S\to I \big) $ , and 2). \textbf{turn recovered}: $\big(H_{a_k}^{t-1}\to H_{a_k}^{t} \big) =  \big( I\to R \big)$. We are using two parameters $\{\alpha', \beta' \}$ to control their probability in the following way:

\begin{enumerate}
\item $a_k$'s  probability of being infected at $t$ is proportional to the count of infected neighbor, with the ratio equals to parameter $\alpha'$

\begin{equation}
    \frac{ \mathds{P}\big(H_{a_k}^{t-1}\to H_{a_k}^{t} =  S\to I \big) } 
    { \mathlarger{\sum}_{a_j}^{\ a_j \neq a_k} \mathds{1}_{\{  H_{a_j}^t = I, P^{a_j}_t = P^{a_k}_t  \}} } =  \mathds{1}_{\{H_{a_k}^{t-1} = S\}}  \cdot \alpha'
\end{equation}

or simply,
\begin{equation}
    \frac{ \mathds{P}\big(H_{a_k}^{t-1}\to H_{a_k}^{t} =  S\to I  \big) } 
    { \mathlarger{\sum}_{a_j}^{\ a_j \neq a_k} \mathds{1}_{a_j \text{ is infected neighbor of } a_k \text{ at } t}} =  \alpha'
\end{equation}

Notably, here we are making an assumption that: 

\begin{equation}
\mathds{P}\Big( 1 < \alpha' * \mathlarger{\sum}_{a_j}^{ a_j \neq a_k} \mathds{1}_{a_j \text{ is infeacted neighbor of } a_k \text{ at } t} \Big) = 0
\end{equation}

This means we are requiring the agent density at any location to be reasonably low in order to reflect the reality that people almost never stay in a place that has absolute causality in terms of infection.

\item An \textbf{I}nfected $a_k$'s probability of being recovered at any time $t$ is constant, which equals to parameter $\beta'$

\begin{equation}
\mathds{P}(H_{a_k}^{t-1}\to H_{a_k}^{t}  =   I\to R) =  \mathds{1}_{\{H_{a_k}^{t-1} = I\}} \cdot \beta'  
\end{equation}

or simply:

\begin{equation}
    \mathds{P}(H_{a_k}^{t-1}\to H_{a_k}^{t}  =   I\to R) = \beta'    
\end{equation}

\end{enumerate}
Based on the $ASIR$ specification above, we propose the following theorem:

\begin{theorem}
($ASIR$ Robustness theorem) $ASIR$ robustly reproduce the infection curve predicted by $SIR$, i.e. "$\forall t: \mathbb{E}\big [ S_{asir}(t) \big] = S_{sir}(t)$; \ $\mathbb{E}\big[ I_{asir}(t) \big] = I_{sir}(t)$; \ $\mathbb{E}\big[ R_{asir}(t)\big] = R_{sir}(t) $",  if:
\begin{enumerate}
\item $S_{asir}(0) = S_{sir}(0)$; \  $I_{asir}(0) = I_{sir}(0)$; \ $R_{asir}(0) = R_{sir}(0)$, and
\item $\alpha = \alpha ' \cdot \mathds{P}(\text{meetup})  \cdot N $;  $\mathds{P}(\text{meetup})$ is determined by $ASIR$'s transition matrix $T_{\text{map}}$, $N$ is the population size, and
\item $\beta = \beta'$
\end{enumerate}
\label{th:1}
\end{theorem}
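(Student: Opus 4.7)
The plan is to proceed by induction on the discrete timestamp $t$. Hypothesis (1) supplies the base case, so the entire argument reduces to the inductive step: assuming $\mathbb{E}[S_{asir}(t-1)] = S_{sir}(t-1)$, $\mathbb{E}[I_{asir}(t-1)] = I_{sir}(t-1)$ and $\mathbb{E}[R_{asir}(t-1)] = R_{sir}(t-1)$, one shows the same three equalities at time $t$. I would do this by comparing the expected one-step increments in each $ASIR$ compartment against the forward-Euler discretisation of equations (\ref{sir1})--(\ref{sir3}) with $\Delta t = 1$, and matching coefficients through hypotheses (2) and (3).

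The central calculation is $\mathbb{E}[\Delta S_{asir}] = -\mathbb{E}[\text{number of new infections at time } t]$. By linearity of expectation this is a sum over agents $a_k$, and for each $a_k$ the model specification gives a conditional infection probability of $\mathds{1}_{\{H^{t-1}_{a_k}=S\}} \cdot \alpha' \cdot (\text{number of infected neighbours of } a_k)$. To take the outer expectation I would use two structural features of $ASIR$: first, the movement Markov chain driven by $T_{\text{map}}$ is decoupled from the health dynamics, so agent positions at time $t$ are independent of healths at time $t-1$; second, the Property established just before the theorem fixes the pairwise colocation probability at the constant $\mathds{P}(\text{meetup})$ under the stationary distribution. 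Together these let me collapse the neighbour-count expectation into $\mathds{P}(\text{meetup}) \cdot \mathbb{E}[S_{asir}(t-1) \cdot I_{asir}(t-1)]$, so that $\mathbb{E}[\Delta S_{asir}] = -\alpha' \cdot \mathds{P}(\text{meetup}) \cdot \mathbb{E}[S_{asir}(t-1) \cdot I_{asir}(t-1)]$. Substituting $\alpha' \mathds{P}(\text{meetup}) = \alpha/N$ from hypothesis (2) then matches the discretised $SIR$ right-hand side. Recovery is easier: each infected agent independently recovers with probability $\beta'$, giving $\mathbb{E}[\Delta R_{asir}] = \beta' \, \mathbb{E}[I_{asir}(t-1)]$, so hypothesis (3) closes the $R$ and $I$ equations via conservation of mass.

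The principal obstacle will be bridging $\mathbb{E}[S_{asir} \cdot I_{asir}]$ and $\mathbb{E}[S_{asir}] \cdot \mathbb{E}[I_{asir}]$, which the induction needs but which are not equal in general because the two counts are negatively correlated across agents. I would address this with a mean-field argument: since agents' health trajectories are mutually independent sequences of Bernoulli events, by the law of large numbers the normalised counts $S_{asir}/N$ and $I_{asir}/N$ concentrate on their means, and the covariance contribution to $\mathbb{E}[S_{asir} \cdot I_{asir}]$ is of lower order in $N$ than the dominant product. The theorem is therefore best read as a mean-field statement valid up to $O(1/N)$ corrections rather than an exact identity at every finite population. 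Two smaller technicalities would also need to be dismissed: (a) the exclusion of $a_k$ from its own neighbour count, which contributes another $O(1/N)$ correction absorbed into the same limit, and (b) the low-density assumption stated just before the theorem, which keeps $\alpha' \cdot (\text{number of infected neighbours}) \le 1$ almost surely so that the conditional-probability expression remains well-defined.
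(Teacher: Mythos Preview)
Your proposal follows the same induction-on-timestamps argument as the paper: match the expected one-step increments of $ASIR$ to the forward-Euler discretisation of the $SIR$ equations via the parameter relations $\alpha'\mathds{P}(\text{meetup})=\alpha/N$ and $\beta'=\beta$, then close with the initial conditions. The only substantive difference is that the paper silently replaces $\mathbb{E}[S_{asir}(t)\,I_{asir}(t)]$ by $\mathbb{E}[S_{asir}(t)]\cdot\mathbb{E}[I_{asir}(t)]$ without comment, whereas you flag this factorisation explicitly and read the result as a mean-field statement up to $O(1/N)$ corrections---your treatment of that step is in fact the more careful one.
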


We will give the proof in the "Proof of Robustness" section.

\subsection{Proof of Robustness}

Let us give the proof for theorem \ref{th:1}. We write "period $(t, t+k)$" when we refer to the period since $t$ till $t+k$ . We are using "$\Delta S_\text{sir} \rvert_{t}^{t+k} $" to denote the change in $S_{sir}(t)$ during the period $(t, t+k)$:
\begin{align}
\Delta S_\text{sir} \rvert_{t}^{t+k} &= S_\text{sir}(t+k) - S_\text{sir}(t) \\
\Delta I_\text{sir} \rvert_{t}^{t+k} &= I_\text{sir}(t+k) - I_\text{sir}(t) \\
\Delta R_\text{sir} \rvert_{t}^{t+k} &= R_\text{sir}(t+k) - R_\text{sir}(t) 
\end{align}Similarly, we are using "$\Delta S_\text{asir} \rvert_{t}^{t+k} $" to denote the change in $S_{asir}(t)$ during the period $(t, t+k)$:
\begin{align}
\Delta S_\text{asir} \rvert_{t}^{t+k} &= S_\text{asir}(t+k) - S_\text{asir}(t) \\
\Delta I_\text{asir} \rvert_{t}^{t+k} &= I_\text{asir}(t+k) - I_\text{asir}(t) \\
\Delta R_\text{asir} \rvert_{t}^{t+k} &= R_\text{asir}(t+k) - R_\text{asir}(t) \\
\end{align}
We prove theorem \ref{th:1} by proving theorems \ref{th:2}, \ref{th:3}:\\

\begin{figure}[htp]
\begin{center}
\begin{tikzcd}[sep=1em]
\tikzcdset{
arrow style=tikz,
arrows={thick,double},
diagrams={>=stealth}
}
\textbf{Theorem 3}\arrow[r] & \textbf{Theorem 2}\arrow[r] & \textbf{Theorem 1} \\
\end{tikzcd}
\caption[Relationship between theorems: theorem \ref{th:3} implies theorem \ref{th:2}; theorem \ref{th:2} implies theorem \ref{th:1}.] {\tabular[t]{@{}l@{}}Relationship between theorems: theorem \ref{th:3} implies theorem \ref{th:2};\\ theorem \ref{th:2} implies theorem \ref{th:1}. \endtabular}
    
\label{img:dependency}
\end{center}
\end{figure}

\begin{theorem}
$ASIR$ robustly reproduce the infection curve predicted by $SIR$, i.e. "$\forall t: \mathbb{E}\big [ S_{asir}(t) \big] = S_{sir}(t)$; \ $\mathbb{E}\big[ I_{asir}(t) \big] = I_{sir}(t)$; \ $\mathbb{E}\big[ R_{asir}(t)\big] = R_{sir}(t) $",  if for any period $(t, t+k)$:
\begin{enumerate}
\item $\mathbb{E}\big [ \Delta S_\text{asir} \big \rvert_{t}^{t+k} \big] = \Delta S_\text{sir} \rvert_{t}^{t+k} $, and
\item $\mathbb{E}\big [ \Delta I_\text{asir} \big \rvert_{t}^{t+k} \big] = \Delta I_\text{sir} \rvert_{t}^{t+k} $ , and 
\item $\mathbb{E}\big [ \Delta R_\text{asir} \big \rvert_{t}^{t+k} \big] = \Delta R_\text{sir} \rvert_{t}^{t+k} $ 
\end{enumerate}
\label{th:2}
\end{theorem}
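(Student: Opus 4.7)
The plan is to prove Theorem \ref{th:2} by a short telescoping argument combined with linearity of expectation. The key observation is that the hypothesis is assumed for \emph{every} period $(t, t+k)$, but to derive the conclusion it suffices to specialise it to the single period $(0, t)$; this reduces the claim to matching the quantities at a single moment, which the initial-condition assumption inherited from Theorem \ref{th:1} (hypothesis 1) handles for free.

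Concretely, first I would write down the pathwise identity $S_{\text{asir}}(t) = S_{\text{asir}}(0) + \Delta S_{\text{asir}}\rvert_{0}^{t}$, and likewise for $I$ and $R$. Taking expectation and using linearity yields $\mathbb{E}\big[S_{\text{asir}}(t)\big] = \mathbb{E}\big[S_{\text{asir}}(0)\big] + \mathbb{E}\big[\Delta S_{\text{asir}}\rvert_{0}^{t}\big]$. Since the initial state of the agent population is fixed deterministically with $S_{\text{asir}}(0) = S_{\text{sir}}(0)$, the first term on the right is just $S_{\text{sir}}(0)$. Applying the hypothesis of Theorem \ref{th:2} to the period $(0, t)$ gives $\mathbb{E}\big[\Delta S_{\text{asir}}\rvert_{0}^{t}\big] = \Delta S_{\text{sir}}\rvert_{0}^{t} = S_{\text{sir}}(t) - S_{\text{sir}}(0)$. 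Adding these two equalities delivers $\mathbb{E}\big[S_{\text{asir}}(t)\big] = S_{\text{sir}}(t)$, exactly the first conclusion. Repeating this three-line argument verbatim with $I$ and $R$ in place of $S$ handles the other two.

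I do not expect a real obstacle inside Theorem \ref{th:2} itself: the argument is a mechanical telescoping, and the only ``moving part'' is noting that the deterministic initial condition lets us strip the expectation off at $t = 0$. The genuine work of the overall program lies further upstream in Theorem \ref{th:3}, where one must actually establish the period-level expected-change identities by unpacking the Markov chain stationarity, the meet-up probability $\mathds{P}(\text{meetup})$, and the calibration $\alpha = \alpha' \cdot \mathds{P}(\text{meetup}) \cdot N$ and $\beta = \beta'$. Theorem \ref{th:2} is the easy ``glue'' between that microscopic computation and the macroscopic claim of Theorem \ref{th:1}.
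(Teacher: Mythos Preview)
Your proposal is correct and matches the paper's own proof essentially line for line: specialise the hypothesis to the period $(0,t)$, add the deterministic initial condition, and conclude. The only cosmetic difference is that you spell out the pathwise identity and the use of linearity of expectation, whereas the paper compresses the same steps into a single chain of equalities.
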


\begin{proof}
Take period $(0, t)$, 
\begin{enumerate}
\item $\mathbb{E}\big [ S_\text{asir}(t) \big] = \mathbb{E}\big [ \Delta S_\text{asir} \big \rvert_{0}^{t} \big] + S_\text{asir}(0)= \Delta S_\text{sir} \rvert_{0}^{t} + S_\text{sir}(0) = S_\text{sir}(t) $, and
\item $\mathbb{E}\big [ I_\text{asir}(t) \big] = \mathbb{E}\big [ \Delta I_\text{asir} \big \rvert_{0}^{t} \big] + I_\text{asir}(0)= \Delta I_\text{sir} \rvert_{0}^{t} + I_\text{sir}(0) = I_\text{sir}(t) $, and
\item $\mathbb{E}\big [ R_\text{asir}(t) \big] = \mathbb{E}\big [ \Delta R_\text{asir} \big \rvert_{0}^{t} \big] + R_\text{asir}(0)= \Delta R_\text{sir} \rvert_{0}^{t} + R_\text{sir}(0) = R_\text{sir}(t) $
\end{enumerate}
\end{proof}

\begin{theorem}
For any period $(t, t+k)$: 

\begin{enumerate}
\item $\mathbb{E}\big [ \Delta S_\text{asir} \big \rvert_{t}^{t+k} \big] = \Delta S_\text{sir} \rvert_{t}^{t+k} $
\item $\mathbb{E}\big [ \Delta I_\text{asir} \big \rvert_{t}^{t+k} \big] = \Delta I_\text{sir} \rvert_{t}^{t+k} $
\item $\mathbb{E}\big [ \Delta R_\text{asir} \big \rvert_{t}^{t+k} \big] = \Delta R_\text{sir} \rvert_{t}^{t+k} $
\end{enumerate}

if:

\begin{enumerate}
\item $S_{asir}(0) = S_{sir}(0)$; \  $I_{asir}(0) = I_{sir}(0)$; \ $R_{asir}(0) = R_{sir}(0)$, and
\item $\alpha = \alpha ' \cdot \mathds{P}(\text{meetup})  \cdot N $;  $\mathds{P}(\text{meetup})$ is determined by $ASIR$'s transition matrix $T_{\text{map}}$, $N$ is the population size, and
\item $\beta = \beta'$
\end{enumerate}
\label{th:3}
\end{theorem}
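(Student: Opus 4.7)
The plan is to reduce the claim on an arbitrary interval $(t, t+k)$ to a single unit-step statement and then close by induction on time. Since the ASIR health counts evolve by integer jumps at discrete timestamps we have $\Delta S_\text{asir}\rvert_t^{t+k} = \sum_{s=t}^{t+k-1} \Delta S_\text{asir}\rvert_s^{s+1}$, and the SIR change admits the matching telescoping interpretation once equations \ref{sir1}--\ref{sir3} are read as their Euler one-step discretization with unit step. By linearity of expectation it then suffices to prove $\mathbb{E}\big[\Delta S_\text{asir}\rvert_s^{s+1}\big] = \Delta S_\text{sir}\rvert_s^{s+1}$ at every $s\ge t$ (and analogously for $I$ and $R$), with the initial-condition hypothesis anchoring the base case.

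For the inductive step I would condition on the ASIR state at time $s$ and compute the one-step expectation agent by agent. A susceptible agent $a_k$ turns infected between $s$ and $s+1$ with conditional probability $\alpha'$ times its count of infected neighbors; summing over the $S_\text{asir}(s)$ susceptibles and pulling the constant out, the expected number of new infections equals $\alpha'$ times the expected total number of (susceptible, infected) co-location pairs. Here I invoke the Property stated after equation \ref{eq:move}: once every agent's position has mixed to the stationary distribution of $T_\text{map}$, any pair of agents is co-located at time $s$ with the same probability $\mathds{P}(\text{meetup})$, independent of health status. This collapses the expected count of infected neighbors of $a_k$ to $\mathds{P}(\text{meetup})\cdot I_\text{asir}(s)$, giving $\mathbb{E}\big[\Delta S_\text{asir}\rvert_s^{s+1}\,\big|\,\text{health at }s\big] = -\alpha'\,\mathds{P}(\text{meetup})\,S_\text{asir}(s)\,I_\text{asir}(s)$. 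Taking outer expectations and applying the parameter correspondence $\alpha = \alpha'\,\mathds{P}(\text{meetup})\,N$ reshapes the right-hand side into $-\alpha\,\mathbb{E}[S_\text{asir}(s)\,I_\text{asir}(s)]/N$, which the inductive hypothesis identifies with $-\alpha\,S_\text{sir}(s)\,I_\text{sir}(s)/N = \Delta S_\text{sir}\rvert_s^{s+1}$.

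The recovery side is cleaner: each infected agent turns recovered independently with probability $\beta'=\beta$, so by direct linearity $\mathbb{E}\big[\Delta R_\text{asir}\rvert_s^{s+1}\big] = \beta\,\mathbb{E}[I_\text{asir}(s)] = \beta\,I_\text{sir}(s) = \Delta R_\text{sir}\rvert_s^{s+1}$, matching equation \ref{sir3}. The infected compartment then follows from the deterministic conservation identity $\Delta I = -\Delta S - \Delta R$ that holds in both models, since agents only ever traverse the chain $S \to I \to R$.

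The hard part will be the factorization $\mathbb{E}[S_\text{asir}(s)\,I_\text{asir}(s)] = \mathbb{E}[S_\text{asir}(s)]\cdot\mathbb{E}[I_\text{asir}(s)]$ implicitly used when passing from the one-step conditional expectation to its match with $S_\text{sir}(s) I_\text{sir}(s)$. This decoupling is not free: $S_\text{asir}(s)$ and $I_\text{asir}(s)$ draw from the same finite agent population and are generally negatively correlated. I would close the gap by leaning on the model's built-in independence assumptions --- individual infection and recovery events are specified as mutually independent given positions, and the $T_\text{map}$ construction was engineered precisely so that the meetup probability factorizes per pair --- and, if necessary, by reading the theorem in its mean-field form where $S_\text{asir},I_\text{asir},R_\text{asir}$ denote the expected trajectories, in which case the product expectation equals the product of expectations by construction. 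Either routing closes the induction and delivers theorem \ref{th:3}.
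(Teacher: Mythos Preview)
Your proposal is correct and matches the paper's proof essentially step for step: both compute the one-step expected ASIR increments via the stationary meetup property, substitute the parameter correspondence $\alpha = \alpha'\,\mathds{P}(\text{meetup})\,N$ and $\beta=\beta'$, handle $I$ by the conservation identity, and close by induction from the matched initial conditions. You are in fact more scrupulous than the paper in flagging the factorization $\mathbb{E}[S_\text{asir}(s)\,I_\text{asir}(s)] = \mathbb{E}[S_\text{asir}(s)]\,\mathbb{E}[I_\text{asir}(s)]$ as a nontrivial step --- the paper simply writes the product of expectations without comment, and your proposed mean-field reading is the honest way to interpret that passage.
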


\begin{proof}
In $SIR$, 
\begin{align}
\begin{split}
\Delta S_\text{sir} \rvert_{t}^{t+1} = \int_{t}^{t+1} \frac{d S}{d t} dt &= - \frac{\alpha}{N} \cdot S_\text{sir}(t)  I_\text{sir}(t) \cdot \int_{t}^{t+1} dt \\
&= -\frac{\alpha}{N} \cdot S_\text{sir}(t)  I_\text{sir}(t)
\end{split}\\
\begin{split}\\
\Delta R_\text{sir} \rvert_{t}^{t+1} = \int_{t}^{t+1} \frac{d R}{d t} dt &= \beta \cdot  I_\text{sir}(t) \cdot \int_{t}^{t+1} dt \\
&= \beta \cdot I_\text{sir}(t) 
\end{split}\\
\begin{split}\\
\Delta I_\text{sir} \rvert_{t}^{t+1} =  - \Delta S_\text{sir} \rvert_{t}^{t+1} - \Delta R_\text{sir} \rvert_{t}^{t+1} &= \frac{\alpha}{N} \cdot S_\text{sir}(t)  I_\text{sir}(t) - \beta \cdot I_\text{sir}(t) 
\end{split}
\end{align}

In $ASIR$,

\begingroup
\allowdisplaybreaks
\begin{align}
\begin{split}
\mathbb{E}\big [ \Delta S_\text{asir} \big \rvert_{t}^{t+1} \big]
&=-\lsum_{\scriptscriptstyle  \big\{ a_k \big\vert H_{a_k}^{t} = S \big\} } \lsum_{\scriptscriptstyle   \big\{ a_j \big\vert H_{a_j}^{t} = I \big\} } \mathds{1}_{(P^{a_j}_t = P^{a_k}_t  )} \cdot  \mathds{1}_{(H_{a_k}^{t}\to H_{a_k}^{t+1} =  S\to I  )} \\
&= -\lsum_{\scriptscriptstyle  \big\{ a_k \big\vert H_{a_k}^{t} = S \big\} } \lsum_{\scriptscriptstyle   \big\{ a_j \big\vert H_{a_j}^{t} = I \big\} } \mathds{P}(\text{meetup}) \cdot \alpha'  \\
&= -\alpha' \cdot \mathds{P}(\text{meetup}) \cdot  \mathbb{E}\big[ S_\text{asir}(t)\big] \cdot \mathbb{E}\big[I_\text{asir}(t)\big]  \\
&= -\frac{\alpha}{N}\cdot \mathbb{E}\big[ S_\text{asir}(t)\big] \cdot \mathbb{E}\big[I_\text{asir}(t)\big]
\end{split}\\
\begin{split}\\
\mathbb{E}\big [ \Delta R_\text{asir} \big \rvert_{t}^{t+1} \big]
&=\lsum_{\scriptscriptstyle  \big\{ a_k \big\vert H_{a_k}^{t} = I \big\} }  \mathds{1}_{H_{a_k}^{t-1}\to H_{a_k}^{t}  =   I\to R} \\
&= \lsum_{\scriptscriptstyle  \big\{ a_k \big\vert H_{a_k}^{t} = I \big\} }  \beta' \\
&= \mathbb{E}\big[ I_\text{asir}(t)\big] \cdot \beta' 
\end{split}\\
\begin{split}\\
\mathbb{E}\big [ \Delta I_\text{asir} \big \rvert_{t}^{t+1} \big]
&= -\mathbb{E}\Big [ \Delta S_\text{asir} \big \rvert_{t}^{t+1} \Big] - \mathbb{E}\Big [ \Delta R_\text{asir} \big \rvert_{t}^{t+1} \Big]  \\
&= \frac{\alpha}{N}\cdot \mathbb{E}\big[ S_\text{asir}(t)\big] \cdot \mathbb{E}\big[I_\text{asir}(t)\big] - \mathbb{E}\big[ I_\text{asir}(t)\big] \cdot \beta'
\end{split}
\end{align}
\endgroup

When $t = 0$,
\begingroup
\allowdisplaybreaks
\begin{align}
\begin{split}
\mathbb{E}\Big [ S_\text{asir}(1)  \Big]  &= \mathbb{E}\big[ S_\text{asir}(0) \big]+ \mathbb{E}\Big [ \Delta S_\text{asir} \Big \rvert_{0}^{1} \Big] \\
&=  S_\text{asir}(0) + \frac{\alpha}{N}\cdot \mathbb{E}\big[ S_\text{asir}(0)\big] \cdot \mathbb{E}\big[I_\text{asir}(0)\big] \\
&= S_\text{sir}(0) +  \frac{\alpha}{N}\cdot  S_\text{sir}(0) \cdot I_\text{sir}(0) \\
&= S_\text{sir}(0) +  \Delta S_\text{sir} \rvert_{0}^{1} \\
&= S_\text{sir}(1)
\end{split}\\
\begin{split}\\
\mathbb{E}\Big [ R_\text{asir}(1)  \Big]  &= \mathbb{E}\big[ R_\text{asir}(0) \big]+ \mathbb{E}\Big [ \Delta R_\text{asir} \Big \rvert_{0}^{1} \Big] \\
&= \mathbb{E}\big[ R_\text{asir}(0) \big] +  \mathbb{E}\big[ I_\text{asir}(0)\big] \cdot \beta' \\
&= R_\text{asir}(0) + \beta' \cdot I_\text{asir}(0)  \\
&= R_\text{sir}(0) + \beta \cdot  I_\text{sir}(0) \\
&= R_\text{sir}(0) + \Delta R_\text{sir} \rvert_{0}^{1} \\
&= R_\text{sir}(1)
\end{split}\\
\begin{split}\\
\mathbb{E}\Big [ I_\text{asir}(1)  \Big]  &= N - \mathbb{E}\Big [ S_\text{asir}(1)  \Big] - \mathbb{E}\Big [ R_\text{asir}(1) \Big] \\
&= N -  S_\text{sir}(1) -  R_\text{sir}(1) \\
&= I_\text{sir}(1)
\end{split}
\end{align}
\endgroup

When $t > 0$, assume $\mathbb{E}\big [ S_\text{asir}(t-1)  \big] =  S_\text{sir}(t-1) $, \ $\mathbb{E}\big [ I_\text{asir}(t-1)  \big] =  I_\text{sir}(t-1) $, and \ $\mathbb{E}\big [ R_\text{asir}(t-1)  \big] =  R_\text{sir}(t-1) $ ,

\begingroup
\allowdisplaybreaks
\begin{align}
\begin{split}
\mathbb{E}\Big [ S_\text{asir}(t)  \Big]  &= \mathbb{E}\Big [S_\text{asir}(t-1)\Big ] + \mathbb{E}\Big [ \Delta S_\text{asir} \Big \rvert_{t-1}^{t} \Big] \\
&=   \mathbb{E}\Big [S_\text{asir}(t-1)\Big ]  - \frac{\alpha'}{N}\cdot \mathbb{E}\big[ S_\text{asir}(t-1)\big] \cdot \mathbb{E}\big[I_\text{asir}(t-1)\big]\\
&=  S_\text{sir}(t-1) - \frac{\alpha}{N} \cdot S_\text{sir}(t-1) \cdot I_\text{sir}(t-1)\\
&= S_\text{sir}(t)  
\end{split}\\
\begin{split}\\
\mathbb{E}\Big [ R_\text{asir}(t)  \Big]  &=   \mathbb{E}\big[ R_\text{asir}(t-1) \big]+ \mathbb{E}\Big [ \Delta R_\text{asir} \Big \rvert_{t-1}^{t} \Big]  \\
&= \mathbb{E}\big[ R_\text{asir}(t-1) \big]+ \mathbb{E}\big[ I_\text{asir}(t-1)\big] \cdot \beta'  \\
&= S_\text{sir}(t-1) + I_\text{sir}(t-1) \cdot \beta \\
&= R_\text{sir}(t)  
\end{split}\\
\begin{split}\\
\mathbb{E}\Big [ I_\text{asir}(t)  \Big]  &= N - \mathbb{E}\Big [ S_\text{asir}(t)  \Big] - \mathbb{E}\Big [ R_\text{asir}(t) \Big] \\
&= N -  S_\text{sir}(t) -  R_\text{sir}(t) \\
&= I_\text{sir}(t)
\end{split}
\end{align}
\endgroup

By mathematical induction, for any period $(t, t+k)$: $\mathbb{E}\big [ \Delta S_\text{asir} \big \rvert_{t}^{t+k} \big] = \Delta S_\text{sir} \rvert_{t}^{t+k} $, \  
$\mathbb{E}\big [ \Delta I_\text{asir} \big \rvert_{t}^{t+k} \big] = \Delta I_\text{sir} \rvert_{t}^{t+k} $, \ $\mathbb{E}\big [ \Delta R_\text{asir} \big \rvert_{t}^{t+k} \big] = \Delta R_\text{sir} \rvert_{t}^{t+k} $.

\end{proof}

Proof of Robustness:
\begin{proof}
Theorem \ref{th:3} implies theorem \ref{th:2}; theorem \ref{th:2} implies theorem \ref{th:1}.
\end{proof}

\section{Results}
In this section, we will discuss the outcome of our \textit{imperfect}\footnote{This happened because those implementations were crafted before we mature the model design and completed the \textit{Proof of Robustness}.} $ASIR$ implementations in two multiagent programming environments: \textbf{GAMA} and \textbf{Agents.jl}\footnote{Agents.jl is more flexible and programmable than GAMA. We shift to Agents.jl to examine if the robustness we observed in GAMA retains under different environment.}.

\subsection{$ASIR$ implementation in \textbf{GAMA} }
Our $ASIR$ implementation in \textbf{GAMA} uses a small map ($50 \times 50$ grid) and large population size ($N=500$.) We observe a perfect reproduction of the coresponding $SIR$'s infection curve. We think this is because $\mathds{P}(\text{meetup})$ and $\frac{\text{step size}}{\text{map size}}$  is high enough to quickly reach $T_{\text{map} }$'s stationary distribution.\\

\begin{figure}[h]
	\begin{center}
		\includegraphics[scale=0.18]{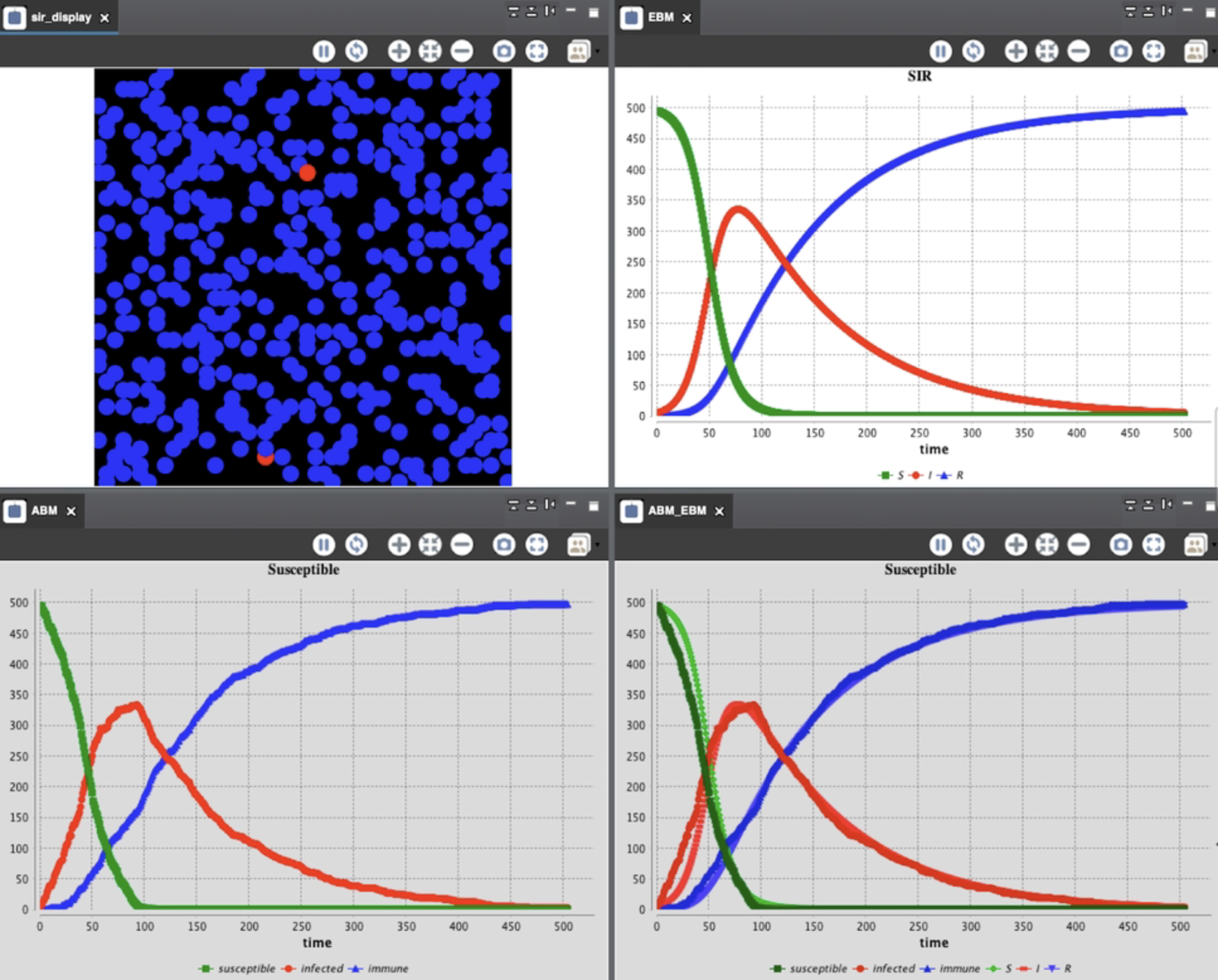}
	\end{center}
\caption[$ASIR$ implementation on \textbf{GAMA} perfectly reproduces the infection curve of the given $SIR$]
    {\tabular[t]{@{}l@{}}$ASIR$ implementation on \textbf{GAMA} perfectly reproduce the infection curve \\ of the given $SIR$. On bottom-left is $ASIR$'s infection curve; on top-right \\ is $SIR$'s infection curve; on bottom-right is a comparison between the two.\endtabular}
	\label{graph:gama}
\end{figure}

\subsection{$ASIR$ implementation in \textbf{Agents.jl} }
However, our $ASIR$ implementation in \textbf{Agents.jl} with a larger map ($100 \times 100$ grid) and the same population size ($N = 100$) fails to robustly reproduce the infection curve of the same benchmark $SIR$ we used in \textbf{GAMA}. (As Figure \ref{graph:jl} illustrates.) The initial \textbf{I}nfected agents recover before infecting enough \textbf{S}usceptible agents; so both $I_{asir}(t)$ and $R_{asir}(t)$ full curves remain flat to 0.

\begin{figure}[h]
	\begin{center}
		\includegraphics[scale=0.25]{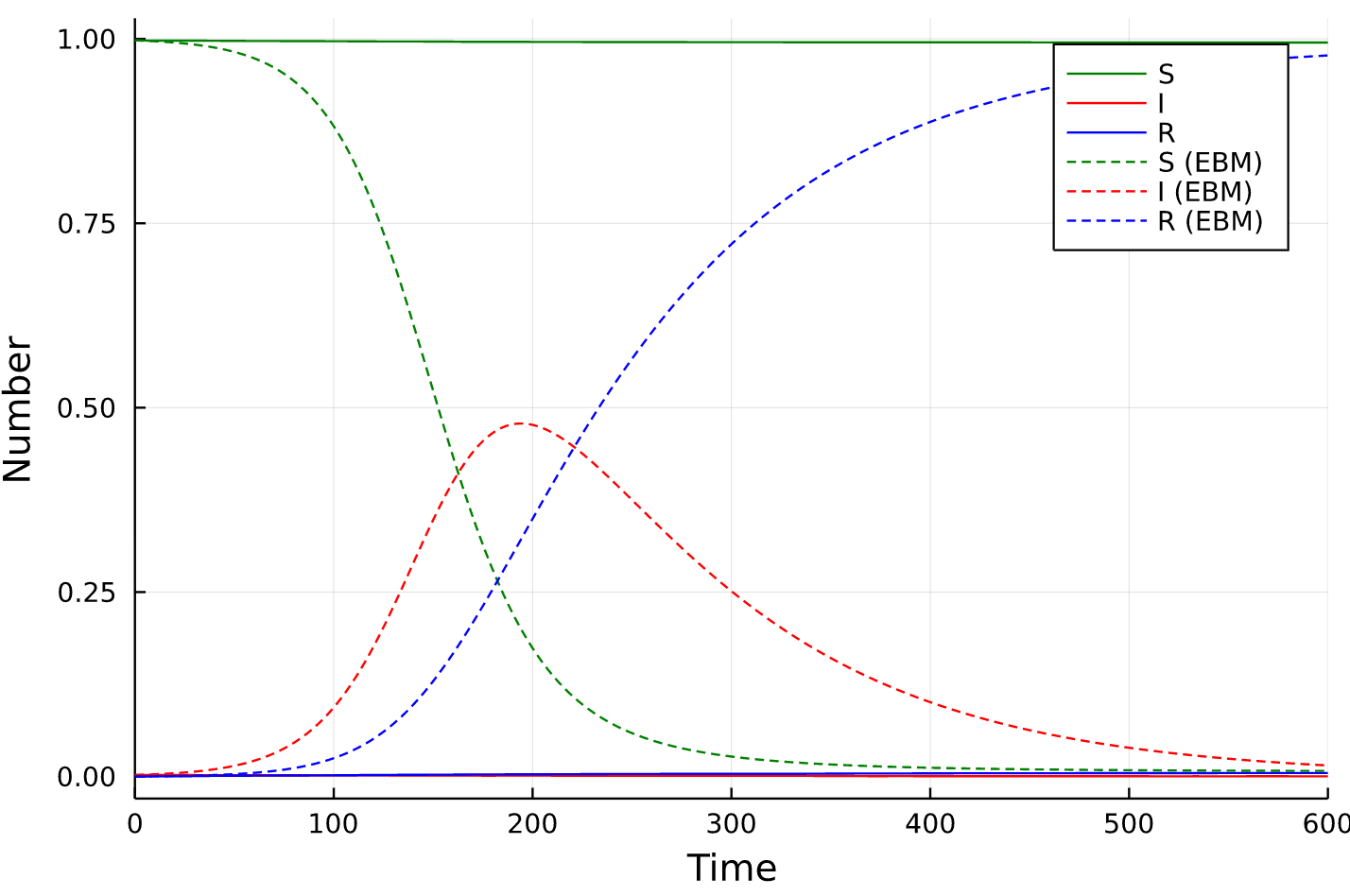}
	\end{center}
\caption[$ASIR$ implementation on \textbf{Agents.jl} fails to reproduce the infection curve of the given $SIR$; both $I_{asir}(t)$ and $R_{asir}(t)$ full curves remain flat to 0.]
    {\tabular[t]{@{}l@{}}$ASIR$ implementation on \textbf{Agents.jl} fails to reproduce the infection curve\\ of the given $SIR$; both $I_{asir}(t)$ and $R_{asir}(t)$ full curves remain flat to 0.\endtabular}
	\label{graph:jl}
\end{figure}
When we play back the trajectory of agents, we find agents are loosely distributed, and their trajectories have few intersections ($\mathds{P}(\text{meetup}) \approx 0 $, $\frac{\text{step size}}{\text{map size} } \approx 0 $). 

We think this is a counterexample where $T_\text{map}$'s stationary distribution can not be reached in finite steps. A primitive explanation is: when $\frac{\text{step size}}{\text{map size}}\approx 0$, positions that are too far away are \textbf{nearly inaccessible to each other in finite steps}; for example, an agent will take nearly infinite steps to move from the top-left corner to the bottom-right corner. This breaks one of the necessary conditions that "\textit{there is only one communication class}" for $T_\text{map}$'s stationary distribution to exist in finite steps.

\section{Discussion}

In this section, we will discuss the primary imperfection of our $ASIR$ implementations. 

The primary imperfection of our implementations is: \textit{they does not guarantee agents to reach the $T_\text{map}$'s stationary distribution at time $t=0$.} This is because we did not implement the \textbf{Move} behavior in a direct form of transition matrix $T_\text{map}$, but as "\textit{at each timestamp $t$, randomly choose one direction from} $(\text{up}, \text{down}, \text{left}, \text{right})$, \textit{then move one unit distance accordingly}." Though theoretically, this implementation of the \textbf{Move} behavior has a transition matrix $T_{\text{map} }$ representation which has a stationary distribution. The problem is: we cannot determine, after which $t$, agents' positions will reach $T_{\text{map} }$'s stationary distribution. Also, we have to estimate but not deduce the value of $\mathds{P}(\text{meetup})$.

As a correction of our present approach, we would implement the \textbf{Move} behavior in direct form of $T_\text{map}$ so that we can 1). guarantee each agents' position will state reach $T_\text{map}$'s stationary distribution at $t=0$, 2). deduce the exact value of $\mathds{P}(\text{meetup})$.

\section{Conclusion}

We propose an agent-based $SIR$ model, $ASIR$ , that achieves the following interesting properties:
\begin{enumerate}
\item  Parameters of $ASIR$ can be deduced from (i.e. written as a determinate expression of) parameters of $SIR$.
\item $ASIR$ robustly reproduce the infection curve predicted by $SIR$ (i.e. \textbf{the expectation values of the population size} of being susceptible, infected and recovered equal to $SIR$'s prediction.)
\end{enumerate}

We validated $ASIR$'s properties by giving: 1). a proof of robustness, 2). two implementations in \textbf{GAMA} and \textbf{Agents.jl}.

There are two interesting directions to extend our work:
\begin{enumerate}
\item Find robust agent-based representations for other compartmental models ($SEIR$, $SEIS$, ...) We could start with describing new types of states and behaviors in the algebraic language we use (like Table \ref{tab:state-reference}) and give a proof of robustness.
\item Construct the transition matrix $T_\text{map}$ based on real data, check if the deduced value of $\alpha'$ comply with intuition. We could start with finding a calibrated $SIR$ model in the literature, where both real \textit{infection data} and \textit{transportation data} are applicable; then use MCMC algorithms like Gibbs sampling to construct the according transition matrix $T_\text{map}$.
\end{enumerate}

% You may also draw perspectives. What's missing? In what directions could your work be extended?

\newpage
\singlespacing
\bibliographystyle{IEEEtran}
\bibliography{references}

%------ To create Appendix with additional stuff -------%
%\newpage
%\appendix
%\section{Appendix}
%Put data files, CAD drawings, additional sketches, etc.

\end{document}